\theoremstyle{remark}
\newtheorem{lemma}{\indent Lemma}
\newtheorem{corollary}{\indent Corollary}
\newtheorem{theorem}{\indent Theorem}
\newtheorem*{remark}{\indent Remark}
\begin{document}

\title{DNA-Correcting Codes in DNA Storage Systems}

\author{Huawei~Wu
  \thanks{H. Wu is with the Department of Mathematical Sciences, Tsinghua University, Beijing, 100084, China (e-mail: wu-hw18@mails.tsinghua.edu.cn)}%
}

\markboth{}%
{Shell \MakeLowercase{\textit{et al.}}: A Sample Article Using IEEEtran.cls for IEEE Journals}


\maketitle

\begin{abstract}
  In \cite{boruchovsky2023dna}, the authors proposed a new model of DNA storage system that integrates all three steps of retrieval and introduced the concept of DNA-correcting codes, which guarantees that the output of the storage system can be decoded to the original data. They also gave necessary and sufficient conditions for  DNA-correcting codes when the data part is free of errors. In this paper, we generalize their results to the general case.
\end{abstract}

\begin{IEEEkeywords}
  DNA storage, DNA-correcting codes, bipartite graph, Hall's marriage theorem.
\end{IEEEkeywords}

\section{Introduction}
\IEEEPARstart{T}{he} potential of using synthetic DNA for large-scale information storage has been recognized for decades. A typical DNA storage system consists of three components: (1) synthesizing the strands that contain the encoded data; (2) a storage container that stores the synthetic DNA strands; (3) a DNA sequencing that reads the strands (the output sequences from the sequencing machines are called reads). In the first step, each strands has millions of copies, and the length of these strands is usually limited to 250-300 nucleotides.\\
\indent Previous works usually tackled each of the three steps independently. In \cite{boruchovsky2023dna}, the authors introduced a new solution to DNA storage that integrates all three steps of retrieval, namely clustering, reconstruction, and error correction. They also introduced the concept of DNA-correcting code, which is the unique solution to the problem of ensuring that the reads from a DNA storage system can be decoded to the original data.\\
\indent Their model is as follows. Assume that all the strands to be stored are of the same length $L$ and the alphabet is $\mathbb{F}_2=\{0,1\}$. Hence each strand can be represented as a vector in $\mathbb{F}_2^L$. As usual, a message is stored in the storage system in the form of $M$ unordered strands. To indicate the position relative to the other strands, each strand starts with a length-$l$ index-field which is distinct from those of the others and only the remaining $L-l$ bits are used to store the data. Hence each strand is of the form $s=(ind,u)$, where $ind\in\mathbb{F}_2^l$ and $u\in\mathbb{F}_2^{L-l}$, and each possible message to be stored is represented as an element of the following set:
\begin{align*}
  \mathcal{X}_{M,L,l}=\big\{\{ & (ind_1,u_1),\cdots,(ind_M,u_M)\}:                             \\
                               & \forall\ i,\ ind_i\in\mathbb{F}_2^l,u_i\in\mathbb{F}_2^{L-l}, \\
                               & \forall\ i\ne j,\ ind_i\ne ind_j\big\}.
\end{align*}
It is easy to see that $|\mathcal{X}_{M,L,l}|=\binom{2^l}{M}2^{M(L-l)}$. A subset of $\mathcal{X}_{M,L,l}$ is called a code.\\
\indent When a message $Z=\{(ind_1,u_1),\cdots,(ind_M,u_M)\}$ is synthesized, each of its strand has a large number of noisy copies (which have some error bits compared to the input strands). During the sequencing process, a subset of these copies is read. For simplicity, assume that each strand has $K$ noisy copies and thus the sequencer's output is a collection of $MK$ reads, where every output read is a noisy copy of one of the input strand. Let $\tau$ denote the maximal relative fraction of incorrect copies that every input strand can have and let $e_i$, $e_d$ denote the largest number of errors that can occur at the index-field, data-field of each strand, respectively. For simplicity, assume also that $M=2^{\beta L}$, where $0<\beta<1$ and $\beta L$ is an integer.\\
\indent Now, a $(\tau,e_i,e_d)_K$-DNA storage system can be viewed as a communication channel, which takes a message
$$Z=\{(ind_1,u_1),\cdots,(ind_M,u_M)\}\in\mathcal{X}_{M,L,l}$$
as input and randomly outputs a collection of $MK$ reads
$$\{\{(ind'_1,u'_1),\cdots,(ind'_{MK},u'_{MK})\}\}$$
such that after a relabeling if necessary,
\begin{enumerate}
  \item for any $0\le j\le M-1$, there are at most $\left\lfloor\tau K\right\rfloor$ distinct from $(ind_j,u_j)$ among $(ind_{jK+1}',u_{jK+1}')$, $\cdots$, $(ind_{jK+K}',u_{jK+K}')$,
  \item for any $0\le j\le M-1$ and any $1\le k\le K$,
        $$d_H(ind_j,ind'_{jK+k})\le e_i$$
        and
        $$d_H(u_j,u'_{jK+k})\le e_d,$$
\end{enumerate}

\noindent where $d_H$ denotes the Hamming distance.\\
\indent For any $Z\in\mathcal{X}_{M,L,l}$, let $B_{\tau,e_i,e_d}^K(Z)$ denote the set of all possible outputs of a $(\tau,e_i,e_d)_K$-DNA storage system when the input is $Z$ (i.e., any element of $B^K_{(\tau,e_i,e_d)}(Z)$ is a multiset of $MK$ reads). Under this setup, a subset $\mathcal{C}\subset\mathcal{X}_{M,L,l}$ is called a $(\tau,e_i,e_d)_K$-DNA-correcting code if for any two distinct codewords $Z$, $Z'\in\mathcal{C}$, we have $B_{(\tau,e_i,e_d)}^K(Z)\cap B^K_{(\tau,e_i,e_d)}(Z')=\emptyset$. Clearly, the requirement of a $(\tau,e_i,e_d)_K$-DNA-correcting code is equivalent to saying that any two distinct inputs cannot lead to the same output after passing through a $(\tau,e_i,e_d)_K$-DNA storage system.\\
\indent In \cite{boruchovsky2023dna}, the authors only studied the case where the data part is free of errors, i.e., $e_d=0$, giving necessary and sufficient conditions for the existence of a $(\tau,e_i,0)_K$-DNA-correcting code. In this paper, we generalize their results to the case where $e_d>0$.

\section{Conditions for DNA-correcting codes}
\indent We first recall some notations in \cite{boruchovsky2023dna}. For any
$$Z=\{(ind_1,u_1),\cdots,(ind_M,u_M)\}\in\mathcal{X}_{M,L,l},$$
$S(Z):=\{u_1,\cdots,u_M\}$ is called the data-field set of $Z$ and $MS(Z)=\{\{u_1,\cdots,u_M\}\}$ is called the data-field multiset of $Z$. The notation $MS(\mathcal{X}_{M,L,l})$ denotes the set of all possible data-field multisets of the elements in $\mathcal{X}_{M,L,l}$. For a code $\mathcal{C}\subset\mathcal{X}_{M,L,l}$ and a data-field multiset $U\subset MS(\mathcal{X}_{M,L,l})$, let $\mathcal{C}_U\subset\mathcal{C}$ be the set of all codewords $Z\subset \mathcal{C}$ such that $MS(Z)=U$.\\
\indent If $A$ and $B$ are two sets of the same cardinality, let ${\rm{BI}}(A,B)$ be the set of all bijections from $A$ to $B$. For any $Z=\{(ind_1,u_1),\cdots,(ind_M,u_M)\}\in\mathcal{X}_{M,L,l}$ and $u\in S(Z)$, let $I(u,Z)$ be the set of all indices of $u$ in $Z$, that is, $I(u,Z)=\{{\rm{ind}}_i:\ u_i=u\}$. For any $Z_1$, $Z_2\in\mathcal{X}_{M,L,l}$, if $MS(Z_1)=MS(Z_2)$, their DNA-distance between $Z_1$ and $Z_2$ is defined as
$$\mathcal{D}(Z_1,Z_2):=\max\limits_{u\in S(Z_1)}\min\limits_{\substack{\pi\in{\rm{BI}}(I(u,Z_1),\\I(u,Z_2))}}\max\limits_{x\in I(u,Z_1)}d_H(x,\pi(x));$$
otherwise $\mathcal{D}(Z_1,Z_2)=\infty$. In \cite{boruchovsky2023dna}, the authors showed that this is a distance function. For a code $\mathcal{C}\subset\mathcal{X}_{M,L,l}$, the minimum DNA-distance of $\mathcal{C}$ is defined as
$$\mathcal{D}(\mathcal{C}):=\min_{Z_1\ne Z_2\in C}\mathcal{D}(Z_1,Z_2).$$
\indent In this paper, we will always equip $\mathbb{R}^2$ with the following partial order:
$$(s,t)\leq(s',t')\iff s\le s'\ \text{and}\ t\le t'.$$
Note that this is not a total order. For any $s=(ind,u)\in\mathbb{F}_2^L$, let
\begin{align*}
   & w_{H,l}(s)=w_H(ind),          \\
   & w_{H,L-l}(s)=w_H(u),          \\
   & w_{H,L}(s)=(w_H(ind),w_H(u)),
\end{align*}
where $w_H$ denotes the Hamming weight function. For any $s$, $s'\in\mathbb{F}_2^L$, let
\begin{align*}
   & d_{H,l}(s,s')=w_{H,l}(s-s'),     \\
   & d_{H,L-l}(s,s')=w_{H,L-l}(s-s'), \\
   & d_{H,L}(s,s')=w_{H,L}(s-s').
\end{align*}
Although $d_{H,L}$ is not a distance function, it satisfies the following analogous properties:
\begin{enumerate}
  \item $d_{H,L}(x,y)\ge (0,0)$ for any $x$, $y\in\mathbb{F}_2^L$ and $d_{H,L}(x,y)=(0,0)$ if and only if $x=y$;
  \item $d_{H,L}(x,y)=d_{H,L}(y,x)$ for any $x$, $y\in\mathbb{F}_2^L$;
  \item $d_{H,L}(x,z)\le d_{H,L}(x,y)+d_{H,L}(y,z)$ for any $x$, $y$, $z\in\mathbb{F}_2^L$.
\end{enumerate}

\indent The following Hall's marriage theorem will be frequently used in this section.

\begin{theorem}[\cite{diestel2005graph}, Theorem 2.1.2]
  For a finite bipartite graph $G=(L\cup R,E)$, there is an $L$-perfect matching if and only if for every subset $Y\subset L$, it holds that $|Y|\le|N_G(Y)|$, where $N_G(Y)$ is the set of all vertices that are adjacent to some vertex in $Y$.
\end{theorem}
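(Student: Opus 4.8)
The plan is to establish the two implications separately: the forward direction is immediate, and the converse goes by induction on $|L|$, splitting into two cases according to whether Hall's condition is ever met with equality on a nonempty proper subset of $L$.

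\textbf{Necessity.} If $M$ is an $L$-perfect matching, then for any $Y\subset L$ the edges of $M$ incident to the vertices of $Y$ send $Y$ injectively into $N_G(Y)$, so $|N_G(Y)|\ge|Y|$.

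\textbf{Sufficiency.} Assume $|Y|\le|N_G(Y)|$ for every $Y\subset L$ and induct on $|L|$. If $|L|=1$, the condition applied to $Y=L$ produces a neighbor of the unique vertex, hence a matching. For $|L|\ge 2$ I would argue as follows. \emph{Case 1: $|N_G(Y)|\ge|Y|+1$ for every nonempty $Y\subsetneq L$.} Choose any $x\in L$ and a neighbor $y$ of $x$, put the edge $xy$ into the matching, and pass to $G'=G-x-y$. Deleting $y$ lowers each neighborhood size by at most one, so for every $Y\subset L\setminus\{x\}$ we get $|N_{G'}(Y)|\ge|N_G(Y)|-1\ge|Y|$; the inductive hypothesis yields an $(L\setminus\{x\})$-perfect matching of $G'$, which together with $xy$ is the desired matching. \emph{Case 2: some nonempty $Y_0\subsetneq L$ has $|N_G(Y_0)|=|Y_0|$.} Decompose $G$ into the induced bipartite graphs $G_1$ on $Y_0\cup N_G(Y_0)$ and $G_2$ on $(L\setminus Y_0)\cup(R\setminus N_G(Y_0))$. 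Hall's condition passes to $G_1$ verbatim. For $G_2$, observe that $N_{G_2}(Z)=N_G(Z)\setminus N_G(Y_0)$ for $Z\subset L\setminus Y_0$, and that $N_G(Y_0\cup Z)$ is the disjoint union of $N_G(Y_0)$ and $N_{G_2}(Z)$; hence $|N_{G_2}(Z)|=|N_G(Y_0\cup Z)|-|Y_0|\ge|Y_0\cup Z|-|Y_0|=|Z|$. Since $|Y_0|$ and $|L\setminus Y_0|$ are both strictly smaller than $|L|$, induction gives matchings of $G_1$ and $G_2$ saturating their respective left parts, and their union is an $L$-perfect matching of $G$.

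The only genuinely subtle point, and thus the main obstacle, is Case 2: one must cut the graph along a tight set $Y_0$ and then verify that Hall's condition is inherited by \emph{both} pieces, the crucial identity being that $N_G(Y_0\cup Z)$ splits as $N_G(Y_0)$ together with the $G_2$-neighborhood of $Z$. Everything else is routine bookkeeping. An alternative route avoiding this case analysis would deduce the statement from K\"onig's theorem or from max-flow/min-cut, but the induction sketched above is self-contained and is the version I would write out.
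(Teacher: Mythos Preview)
The paper does not supply its own proof of this theorem: it is quoted verbatim from Diestel's textbook as a tool to be used later, with no \texttt{proof} environment following it. Your argument is correct and is precisely the classical induction on $|L|$ with the tight-set split (indeed essentially the proof in Diestel, Theorem~2.1.2), so there is nothing to compare against and no gap to report.
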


\indent First, we consider the case where $\tau=1$.

\begin{lemma}\label{20230701lemmaone}
  Let $Z$, $Z'\in\mathcal{X}_{M,L,l}$. Then
  $$B^K_{(1,e_i,e_d)}(Z)\cap B^K_{(1,e_i,e_d)}(Z')\neq\emptyset$$
  if and only if there exists $\pi\in{\rm{BI}}(Z,Z')$ such that for any $x\in Z$, $d_{H,L}(x,\pi(x))\le(2e_i,2e_d)$.
\end{lemma}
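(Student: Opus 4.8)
The plan is to establish the two implications separately, relying throughout on the elementary ``midpoint'' fact: whenever $a,b\in\mathbb{F}_2^n$ satisfy $d_H(a,b)\le 2e$, there is some $c\in\mathbb{F}_2^n$ with $d_H(a,c)\le e$ and $d_H(b,c)\le e$. Indeed, letting $c$ agree with $b$ on exactly $\lfloor d_H(a,b)/2\rfloor$ of the coordinates on which $a$ and $b$ differ and with $a$ on all remaining coordinates gives $d_H(a,c)=\lfloor d_H(a,b)/2\rfloor\le e$ and $d_H(b,c)=\lceil d_H(a,b)/2\rceil\le e$.

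For the ``if'' direction, suppose $\pi\in{\rm BI}(Z,Z')$ satisfies $d_{H,L}(x,\pi(x))\le(2e_i,2e_d)$ for every $x\in Z$. For each $x=(ind,u)\in Z$, write $\pi(x)=(ind',u')$; applying the midpoint fact to $ind,ind'$ with $e=e_i$ and to $u,u'$ with $e=e_d$ yields a read $y_x=(\widehat{ind},\widehat{u})$ with $d_H(ind,\widehat{ind})\le e_i$, $d_H(ind',\widehat{ind})\le e_i$, $d_H(u,\widehat{u})\le e_d$ and $d_H(u',\widehat{u})\le e_d$. I would then take $W$ to be the multiset obtained by including $K$ copies of $y_x$ for each $x\in Z$, which is a collection of $MK$ reads, and verify directly that $W$ lies in both $B^K_{(1,e_i,e_d)}(Z)$ and $B^K_{(1,e_i,e_d)}(Z')$: assigning the $K$ copies of $y_x$ to the strand $x$ witnesses $W\in B^K_{(1,e_i,e_d)}(Z)$, while assigning those same copies to the strand $\pi(x)$ — a valid relabeling since $\pi$ is a bijection — witnesses $W\in B^K_{(1,e_i,e_d)}(Z')$. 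Here condition (1) of the channel definition is automatic when $\tau=1$, so only the distance bounds of condition (2) need checking, and these hold by the choice of the $y_x$.

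For the ``only if'' direction, fix a common output $W$ in $B^K_{(1,e_i,e_d)}(Z)\cap B^K_{(1,e_i,e_d)}(Z')$, viewed as $MK$ reads counted with multiplicity. Its membership in $B^K_{(1,e_i,e_d)}(Z)$ provides an assignment of each read to a strand of $Z$ under which every strand receives exactly $K$ reads and every read is within $(e_i,e_d)$ of its assigned strand; likewise membership in $B^K_{(1,e_i,e_d)}(Z')$ provides such an assignment into $Z'$. Form the bipartite graph $H$ on $Z\sqcup Z'$ joining $x\in Z$ to $x'\in Z'$ exactly when some read is assigned to $x$ by the first assignment and to $x'$ by the second. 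If $v$ denotes the value of a read witnessing an edge $xx'$, then by the triangle inequality $d_{H,L}(x,x')\le d_{H,L}(x,v)+d_{H,L}(v,x')\le(2e_i,2e_d)$, so every edge of $H$ respects the bound $(2e_i,2e_d)$. Moreover $H$ satisfies Hall's condition on $Z$: for any $Y\subseteq Z$ there are exactly $K|Y|$ reads assigned by the first assignment to strands of $Y$, each of which is assigned by the second assignment to a strand of $N_H(Y)$, while there are exactly $K|N_H(Y)|$ reads assigned by the second assignment to strands of $N_H(Y)$; hence $K|Y|\le K|N_H(Y)|$, i.e.\ $|Y|\le|N_H(Y)|$. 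By Hall's marriage theorem there is a $Z$-perfect matching in $H$, which is a bijection $\pi\in{\rm BI}(Z,Z')$ because $|Z|=|Z'|=M$, and $d_{H,L}(x,\pi(x))\le(2e_i,2e_d)$ for every $x\in Z$ by construction.

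I do not expect a real obstacle. The ``if'' direction is the midpoint construction combined with the collapse of condition (1) at $\tau=1$, and the ``only if'' direction is the observation that the two strand-assignments of any common output form a $K$-regular bipartite (multi)graph whose edges already obey $(2e_i,2e_d)$ by the triangle inequality, so Hall applies immediately. The only point requiring attention is the bookkeeping in the ``only if'' part: one should treat the reads as labelled copies rather than as elements of a multiset, so that ``each strand receives exactly $K$ reads'' is literally true and the count $K|Y|\le K|N_H(Y)|$ is exact.
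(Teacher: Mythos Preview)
Your proof is correct and follows the same overall strategy as the paper: the midpoint construction for the ``if'' direction and Hall's marriage theorem for the ``only if'' direction. The one noteworthy difference is in the packaging of the ``only if'' argument. The paper works with the bipartite graph $G$ on $Z\cup Z'$ whose edges are \emph{all} pairs at $d_{H,L}$-distance at most $(2e_i,2e_d)$, assumes Hall fails at some $Y\subset Z$, and derives a contradiction by showing the reads assigned to $Y$ and those assigned to $Z'\setminus N_G(Y)$ are disjoint (via the triangle inequality) yet together exceed $MK$. You instead build the smaller graph $H$ whose edges record only the pairs actually linked by a common read; this makes the underlying multigraph $K$-regular on both sides, so Hall's condition $|Y|\le |N_H(Y)|$ is immediate, and the triangle inequality is used only to check that edges of $H$ already satisfy the $(2e_i,2e_d)$ bound. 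Your version is a mild streamlining of the same idea.
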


\begin{proof}
  Assume that there exists $\pi\in{\rm{BI}}(Z,Z')$ such that for any $x\in Z$, $d_{H,L}(x,\pi(x))\le(2e_i,2e_d)$. We may assume that $Z=\{s_1=(ind_1,u_1),\cdots,s_M=(ind_M,u_M)\}$ and $Z'=\{s_1'=(ind'_1,u'_1),\cdots,s_M'=(ind'_M,u'_M)\}$ such that $\pi(s_j)=s_j'$ for any $1\le j\le M$. For any $1\le j\le M$, since $d_{H,L}(s_j,s_j')\le (2e_i,2e_d)$, i.e., $d_{H}(ind_j,ind_j')\le 2e_i$ and $d_{H}(u_j,u_j')\le 2e_d$, there exists $s_j''=(ind_j'',u_j'')\in\mathbb{F}_2^L$ such that
  \begin{align*}
    d_{H}(ind_j,ind_j'')\le e_i, & \quad d_{H}(ind_j',ind_j'')\le e_i, \\
    d_{H}(u_j,u_j'')\le e_d,     & \quad d_{H}(u_j',u_j'')\le e_d.
  \end{align*}
  Then the multiset
  $$\{\{s_1'',\cdots,s_1'',\cdots,s_{M}'',\cdots,s_{M}''\}\}$$
  in which each $s_j''$ occurs $K$ times, lies in
  $$B^K_{(1,e_i,e_d)}(Z)\cap B^K_{(1,e_i,e_d)}(Z').$$
  \indent Conversely, assume that
  $$A=\{\{s_1'',\cdots,s_{MK}''\}\}\in B^K_{(1,e_i,e_d)}(Z)\cap B^K_{(1,e_i,e_d)}(Z').$$
  Consider the bipartite graph $G=(Z\cup Z',E)$ where
  $$E=\{(x,y)\in Z\times Z':\ d_H(x,y)\le (2e_i,2e_d)\}.$$
  Assume, for a contradiction, that there does not exist $\pi\in{\rm{BI}}(Z,Z')$ satisfying the desired properties. By Hall's marriage theorem, there exists $Y\subset Z$ such that $|Y|>|N_G(Y)|$. Assume that $Y'=Z'\backslash N_G(Y)$, $|Y|=r$ and $|Y'|=s$. Then $s>M-r$. From the definition of $A$, there is a multi-subset $A_1$ of $A$ having $rK$ elements (counting multiplicity) such that for any $x\in A_1$, there exists  $y\in Y$ such that $d_{H,L}(x,y)\le (e_i,e_d)$ and there is a multi-subset $B_1$ of $A$ having $sK$ elements (counting multiplicity) such that for any $z\in B_1$, there exists $y'\in Y'$ such that $d_{H,L}(z,y')\le (e_i,e_d)$. If there exists $x\in A_1\cap B_1$, then there exist elements $y\in Y$ and $y'\in Y'$ such that $d_{H,L}(x,y)\le (e_i,e_d)$ and $d_{H,L}(x,y')\le (e_i,e_d)$. It follows that
  $$d_{H,L}(y,y')\le d_{H,L}(x,y)+d_{H,L}(x,y')\le (2e_i,2e_d),$$
  which implies that $y'\in N_G(Y)$. This is a contradiction and thus $A_1\cap B_1=\emptyset$. Since $|A_1|=rK$ and $|B_1|=sK$, we have $rK+sK\le MK$, i.e., $s\le M-r$. This contradicts the fact that $s>M-r$. Therefore, there exists $\pi\in{\rm{BI}}(Z,Z')$ such that for any $x\in Z$, $d_{H,L}(x,\pi(x))\le(2e_i,2e_d)$.
\end{proof}

\indent The following theorem follows immediately from Lemma \ref{20230701lemmaone}.

\begin{theorem}\label{20230701theoremone}
  A code $\mathcal{C}\subset\mathcal{X}_{M,L,l}$ is a $(1,e_i,e_d)_K$-DNA-correcting code if and only if for any $Z$, $Z'\in \mathcal{C}$ with $Z\neq Z'$, it holds that for any $\pi\in{\rm{BI}}(Z,Z')$, there exists $x\in Z$ such that $d_{H,L}(x,\pi(x))\not\le(2e_i,2e_d)$, i.e., $d_{H,l}(x,\pi(x))>2e_i$ or $d_{H,L-l}(x,\pi(x))>2e_d$.
\end{theorem}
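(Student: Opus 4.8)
The plan is to deduce Theorem~\ref{20230701theoremone} directly from Lemma~\ref{20230701lemmaone} by a simple logical contrapositive. Recall that $\mathcal{C}$ is a $(1,e_i,e_d)_K$-DNA-correcting code if and only if for every pair of distinct codewords $Z,Z'\in\mathcal{C}$ we have $B^K_{(1,e_i,e_d)}(Z)\cap B^K_{(1,e_i,e_d)}(Z')=\emptyset$. By Lemma~\ref{20230701lemmaone}, this intersection is empty precisely when there is \emph{no} $\pi\in{\rm{BI}}(Z,Z')$ with $d_{H,L}(x,\pi(x))\le(2e_i,2e_d)$ for all $x\in Z$. Negating ``there exists such a $\pi$'' gives ``for every $\pi\in{\rm{BI}}(Z,Z')$ there is some $x\in Z$ with $d_{H,L}(x,\pi(x))\not\le(2e_i,2e_d)$.''

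The only remaining point is to unpack the meaning of $d_{H,L}(x,\pi(x))\not\le(2e_i,2e_d)$ in terms of the partial order introduced earlier. Writing $d_{H,L}(x,\pi(x))=(d_{H,l}(x,\pi(x)),d_{H,L-l}(x,\pi(x)))$, the negation of $(s,t)\le(2e_i,2e_d)$ under the componentwise partial order is exactly $s>2e_i$ or $t>2e_d$, i.e. $d_{H,l}(x,\pi(x))>2e_i$ or $d_{H,L-l}(x,\pi(x))>2e_d$. This matches the statement verbatim.

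I do not anticipate any real obstacle here: the theorem is a routine restatement of the lemma combined with the definition of a DNA-correcting code, so the ``proof'' is a one- or two-line chain of equivalences. The one place to be slightly careful is the logical quantifier structure---the code condition quantifies universally over all pairs $Z\ne Z'$, and for each such pair we negate an existential statement over bijections, which flips to a universal over bijections together with an existential over $x$; writing this out carefully is all that is required. Hence the theorem follows immediately from Lemma~\ref{20230701lemmaone}, as claimed.
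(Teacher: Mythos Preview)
Your proposal is correct and matches the paper's approach exactly: the paper simply states that the theorem follows immediately from Lemma~\ref{20230701lemmaone}, and your contrapositive argument unpacking the quantifiers and the partial order is precisely the content of that immediate deduction.
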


\indent Let us consider the special case where $e_d=0$.

\begin{lemma}\label{20230701lemmatwo}
  Let $\mathcal{C}\subset\mathcal{X}_{M,L,l}$ and let $r\in\mathbb{R}_{\ge 0}$. Then $\mathcal{D}(\mathcal{C})\le r$ if and only if there exists $Z$, $Z'\in\mathcal{C}$ with $Z\ne Z'$ such that there exists $\pi\in{\rm{BI}}(Z,Z')$ satisfying that $d_{H,L}(x,\pi(x))\le (r,0)$ for any $x\in Z$.
\end{lemma}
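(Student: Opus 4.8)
The plan is to unwind both sides of the claimed equivalence to one and the same concrete assertion about a pair of distinct codewords $Z,Z'\in\mathcal{C}$: namely that $MS(Z)=MS(Z')$ and that, separately on each fibre of equal data-field value, there is a bijection of the index-fields moving every index by Hamming distance at most $r$. The only substance is bookkeeping --- translating between a single bijection $\pi\in{\rm{BI}}(Z,Z')$ and the family of ``per data-value'' bijections that appears in the definition of $\mathcal{D}$.

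For the ``if'' direction I would start from $Z\ne Z'$ and $\pi\in{\rm{BI}}(Z,Z')$ with $d_{H,L}(x,\pi(x))\le(r,0)$ for all $x\in Z$, and first note that, by the coordinatewise order, the second-coordinate bound gives $d_{H,L-l}(x,\pi(x))\le 0$, hence $d_{H,L-l}(x,\pi(x))=0$ (Hamming weight is nonnegative), so $\pi$ preserves the data-field: if $x=(ind,u)$ then $\pi(x)=(ind',u)$ with $d_H(ind,ind')\le r$. Consequently $\pi$ restricts, for each value $u$, to a bijection from $\{x\in Z:u_x=u\}$ onto $\{y\in Z':u_y=u\}$; comparing cardinalities gives $MS(Z)=MS(Z')$, hence $S(Z)=S(Z')$ and $|I(u,Z)|=|I(u,Z')|$ for every $u$. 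Reading off first coordinates turns this restriction into $\pi_u\in{\rm{BI}}(I(u,Z),I(u,Z'))$ with $d_H(\,\cdot\,,\pi_u(\cdot))\le r$, so each inner $\min$-$\max$ in $\mathcal{D}(Z,Z')$ is $\le r$ and therefore $\mathcal{D}(\mathcal{C})\le\mathcal{D}(Z,Z')\le r$.

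For the ``only if'' direction I would use that $\mathcal{C}$ is finite, so the minimum defining $\mathcal{D}(\mathcal{C})$ is attained (the case $|\mathcal{C}|\le 1$ is trivial, both sides failing): there are distinct $Z,Z'\in\mathcal{C}$ with $\mathcal{D}(Z,Z')\le r$, and finiteness of this value forces $MS(Z)=MS(Z')$, hence $S(Z)=S(Z')$ and $|I(u,Z)|=|I(u,Z')|$ for every $u$. For each $u\in S(Z)$ the bound supplies $\pi_u\in{\rm{BI}}(I(u,Z),I(u,Z'))$ with $d_H(x,\pi_u(x))\le r$ for all $x$, and I would glue these into $\pi\in{\rm{BI}}(Z,Z')$ by $\pi((ind,u))=(\pi_u(ind),u)$; this is well defined and bijective because the blocks $\{(ind,u):ind\in I(u,Z)\}$ partition $Z$ as $u$ ranges over $S(Z)$, and likewise for $Z'$. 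By construction $d_{H,l}(x,\pi(x))\le r$ and $d_{H,L-l}(x,\pi(x))=0$, i.e. $d_{H,L}(x,\pi(x))\le(r,0)$, for every $x\in Z$.

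I do not anticipate a real obstacle; the two points that need care are that ``$d_{H,L}(x,\pi(x))\le(r,0)$'' genuinely forces the data-field to be preserved (this is exactly where the second coordinate bound $0$ does the work) and that the fibrewise bijections glue to a single global one, both of which reduce to the observation that once $MS(Z)=MS(Z')$ the data-field value partitions $Z$ and $Z'$ into blocks of matching sizes. I expect the write-up to be short.
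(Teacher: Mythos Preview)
Your proposal is correct and follows essentially the same approach as the paper: in one direction you glue the per-data-value bijections $\pi_u$ into a global $\pi$, and in the other you observe that the second-coordinate bound $0$ forces $\pi$ to preserve the data-field and hence restrict to bijections $\pi_u$ on each fibre. The only differences are cosmetic---you treat the two implications in the opposite order and make explicit the degenerate case $|\mathcal{C}|\le 1$, which the paper leaves implicit.
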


\begin{proof}
  Assume that $\mathcal{D}(\mathcal{C})\le r$. Then there exists $Z$, $Z'\in\mathcal{C}$ with $Z\ne Z'$ such that $\mathcal{D}(Z,Z')\le r$. By the definition of the DNA-distance, we must have $MS(Z)=MS(Z')$ and for any $u\in S(Z)$, there exists $\pi_u\in{\rm{BI}}(I(u,Z),I(u,Z'))$ such that for any $ind\in I(u,Z)$, $d_H(ind,\pi_u(ind))\le r$. We can glue these $\pi_u$ ($u\in S(Z)$) into a bijection $\pi:Z\rightarrow Z'$. It is clear that for any $x=(ind,u)\in Z$, $d_{H,l}(x,\pi(x))=d_H(ind,\pi_u(ind))\le r$ and $d_{H,L-l}(x,\pi(x))=0$, i.e., $d_{H,L}(x,\pi(x))\le (r,0)$.\\
  \indent Conversely, assume that there exists $Z$, $Z'\in\mathcal{C}$ with $Z\ne Z'$ such that there exists $\pi\in{\rm{BI}}(Z,Z')$ satisfying that $d_{H,L}(x,\pi(x))\le (r,0)$ for any $x\in Z$. Then $MS(Z)=MS(Z')$ and for any $u\in S(Z)$, $\pi$ induces a bijection $\pi_u\in{\rm{BI}}(I(u,Z),I(u,Z'))$ such that for any $ind\in I(u,Z)$,
  $$d_H(ind,\pi_u(ind))=d_{H,l}((ind,u),\pi((ind,u)))\le r.$$
  It follows that $\mathcal{D}(Z,Z')\le r$ and thus $\mathcal{D}(\mathcal{C})\le r$.
\end{proof}

\indent The following result can now be easily derived from Theorem \ref{20230701theoremone} and Lemma \ref{20230701lemmatwo}.

\begin{corollary}[\cite{boruchovsky2023dna}, Theorem 2]
  A code $\mathcal{C}\subset\mathcal{X}_{M,L,l}$ is a $(1,e_i,0)_K$-DNA-correcting code if and only if $\mathcal{D}(\mathcal{C})>2e_i$.
\end{corollary}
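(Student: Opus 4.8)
The plan is to simply combine Theorem \ref{20230701theoremone} (specialized to $e_d=0$) with Lemma \ref{20230701lemmatwo} (applied with $r=2e_i$), reading one statement as the negation of the other. First I would restate what Theorem \ref{20230701theoremone} gives when $e_d=0$: the code $\mathcal{C}$ is a $(1,e_i,0)_K$-DNA-correcting code if and only if for every pair of distinct $Z,Z'\in\mathcal{C}$ and every $\pi\in{\rm{BI}}(Z,Z')$ there is some $x\in Z$ with $d_{H,L}(x,\pi(x))\not\le(2e_i,0)$. Negating this, $\mathcal{C}$ fails to be a $(1,e_i,0)_K$-DNA-correcting code precisely when there exist distinct $Z,Z'\in\mathcal{C}$ and some $\pi\in{\rm{BI}}(Z,Z')$ with $d_{H,L}(x,\pi(x))\le(2e_i,0)$ for all $x\in Z$.

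Next I would invoke Lemma \ref{20230701lemmatwo} with the nonnegative real number $r=2e_i$: it says exactly that $\mathcal{D}(\mathcal{C})\le 2e_i$ holds if and only if there exist distinct $Z,Z'\in\mathcal{C}$ and $\pi\in{\rm{BI}}(Z,Z')$ with $d_{H,L}(x,\pi(x))\le(2e_i,0)$ for all $x\in Z$. This is word-for-word the condition obtained in the previous paragraph for $\mathcal{C}$ to fail to be a $(1,e_i,0)_K$-DNA-correcting code. Hence $\mathcal{C}$ is not a $(1,e_i,0)_K$-DNA-correcting code if and only if $\mathcal{D}(\mathcal{C})\le 2e_i$, and taking the contrapositive yields that $\mathcal{C}$ is a $(1,e_i,0)_K$-DNA-correcting code if and only if $\mathcal{D}(\mathcal{C})>2e_i$.

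There is essentially no serious obstacle here: the work has already been done in establishing Theorem \ref{20230701theoremone} and Lemma \ref{20230701lemmatwo}, and the corollary is a purely logical consequence. The only points to be careful about are that $2e_i\ge 0$ so that Lemma \ref{20230701lemmatwo} is applicable, and that the strictness of the inequality $\mathcal{D}(\mathcal{C})>2e_i$ correctly matches the negation of $\mathcal{D}(\mathcal{C})\le 2e_i$ — both of which are immediate.
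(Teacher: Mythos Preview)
Your proposal is correct and follows exactly the approach indicated in the paper, which simply states that the corollary is easily derived from Theorem~\ref{20230701theoremone} and Lemma~\ref{20230701lemmatwo}. Your write-up spells out explicitly the specialization to $e_d=0$, the choice $r=2e_i$, and the contrapositive step, but this is precisely the intended derivation.
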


\indent Next, we consider the case where $\frac{K}{2}\le\lfloor\tau K\rfloor<K$.

\begin{lemma}\label{20230701lemmathree}
  Assume that $\frac{K}{2}\le\lfloor\tau K\rfloor<K$ and let $Z$, $Z'\in\mathcal{X}_{M,L,l}$. If there exists $\pi\in{\rm{BI}}(Z,Z')$ such that for any $x\in Z$, $d_{H,L}(x,\pi(x))\le(e_i,e_d)$, then
  $$B^K_{(\tau,e_i,e_d)}(Z)\cap B^K_{(\tau,e_i,e_d)}(Z')\neq\emptyset.$$
\end{lemma}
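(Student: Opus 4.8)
The plan is to exhibit a single multiset of $MK$ reads that simultaneously witnesses membership in $B^K_{(\tau,e_i,e_d)}(Z)$ and in $B^K_{(\tau,e_i,e_d)}(Z')$, in the same spirit as the forward direction of Lemma~\ref{20230701lemmaone}, but now exploiting that $d_{H,L}(x,\pi(x))\le(e_i,e_d)$ rather than $(2e_i,2e_d)$. The point is that this stronger bound makes each strand of $Z$ already a legal noisy copy of the corresponding strand of $Z'$ and vice versa, so no intermediate strand $s_j''$ is needed.

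Concretely, first relabel so that $Z=\{s_1,\dots,s_M\}$ and $Z'=\{s_1',\dots,s_M'\}$ with $\pi(s_j)=s_j'$ for every $j$. Set $a=K-\lfloor\tau K\rfloor$ and $b=\lfloor\tau K\rfloor$, so that $a+b=K$ and, crucially, $a\le b$ because $\lfloor\tau K\rfloor\ge K/2$. Let $A$ be the multiset which, for each $j$, contains $a$ copies of $s_j$ and $b$ copies of $s_j'$; then $|A|=M(a+b)=MK$. I take the block associated to index $j$ to consist of exactly these $K$ reads, and I will use the same block structure for both $Z$ and $Z'$.

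To see $A\in B^K_{(\tau,e_i,e_d)}(Z)$: in block $j$ the reads not equal to $s_j$ lie among the $b$ copies of $s_j'$, so there are at most $b=\lfloor\tau K\rfloor$ of them; and every read in block $j$ is either $s_j$ (Hamming distances $0$) or $s_j'=\pi(s_j)$, which satisfies $d_{H,L}(s_j,s_j')\le(e_i,e_d)$, i.e. the index-field is within $e_i$ and the data-field within $e_d$ of those of $s_j$. Symmetrically, $A\in B^K_{(\tau,e_i,e_d)}(Z')$: in block $j$ the reads not equal to $s_j'$ are the $a$ copies of $s_j$, and $a=K-\lfloor\tau K\rfloor\le\lfloor\tau K\rfloor$ by the standing hypothesis; and every read in block $j$ is within $(e_i,e_d)$ of $s_j'$ since $d_{H,L}(s_j',s_j)=d_{H,L}(s_j,s_j')\le(e_i,e_d)$. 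Hence $A$ lies in the intersection, which is therefore nonempty.

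There is no real obstacle here: the construction is explicit and the verification is a short count. The one step that must not be overlooked is precisely the inequality $K-\lfloor\tau K\rfloor\le\lfloor\tau K\rfloor$, i.e. $\lfloor\tau K\rfloor\ge K/2$, which is exactly what lets the $s_j$-copies be absorbed into the error budget on the $Z'$ side; this is why the hypothesis $\frac{K}{2}\le\lfloor\tau K\rfloor$ appears in the statement, and one expects the analogous conclusion to fail once $\lfloor\tau K\rfloor<K/2$.
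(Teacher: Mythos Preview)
Your proof is correct and follows essentially the same approach as the paper: both construct an explicit multiset $A$ by placing, in each block $j$, some copies of $s_j$ and the remaining copies of $s_j'$, and then verify the two membership conditions using $d_{H,L}(s_j,s_j')\le(e_i,e_d)$ together with $\lfloor\tau K\rfloor\ge K/2$. The only cosmetic difference is that the paper takes $\lfloor\tau K\rfloor$ copies of $s_j$ and $K-\lfloor\tau K\rfloor$ copies of $s_j'$, whereas you swap these counts; both choices work for the same reason, and your verification is if anything slightly more explicit.
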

\begin{proof}
  We may assume that
  $$Z=\{s_1=(ind_1,u_1),\cdots,s_M=(ind_M,u_M)\}$$ and
  $$Z'=\{s_1'=(ind'_1,u'_1),\cdots,s_M'=(ind'_M,u'_M)\}$$
  such that $\pi(s_j)=s_j'$ for any $1\le j\le M$. Consider the multiset
  \begin{align*}
    A=\{\{ & s_1,\cdots,s_1,s_1',\cdots,s_1',\cdots,s_M,\cdots,s_M, \\
           & s_M',\cdots,s_{M}'\}\},
  \end{align*}
  in which each $s_j$ occurs $\left\lfloor\tau K\right\rfloor$ times and each $s_j'$ occurs $K-\left\lfloor\tau K\right\rfloor$ times for any $1\le j\le M$. Since $\frac{K}{2}\le\lfloor\tau K\rfloor$ and for any $1\le j\le M$, $d_{H}(ind_i,ind'_i)\le e_i$ and $d_{H}(u_i,u'_i)\le e_d$ we have $A\in B^K_{(\tau,e_i,e_d)}(Z)\cap B^K_{(\tau,e_i,e_d)}(Z')$.
\end{proof}

\indent The following theorem follows immediately from Lemma \ref{20230701lemmathree}.


\begin{theorem}\label{20230701theoremtwo}
  Assume that $\frac{K}{2}\le\lfloor\tau K\rfloor<K$. If $\mathcal{C}\subset\mathcal{X}_{M,L,l}$ is a $(\tau,e_i,e_d)_K$-DNA-correcting code, then for any $Z$, $Z'\in\mathcal{C}$ with $Z\neq Z'$, it holds that for any $\pi\in{\rm{BI}}(Z,Z')$, there exists $x\in Z$ such that $d_{H,L}(x,\pi(x))\not\le(e_i,e_d)$.
\end{theorem}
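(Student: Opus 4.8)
The plan is to argue by contraposition, feeding Lemma \ref{20230701lemmathree} directly into the definition of a DNA-correcting code. Suppose $\mathcal{C}\subset\mathcal{X}_{M,L,l}$ is a $(\tau,e_i,e_d)_K$-DNA-correcting code and, toward a contradiction, suppose there exist distinct codewords $Z,Z'\in\mathcal{C}$ together with a bijection $\pi\in{\rm{BI}}(Z,Z')$ such that $d_{H,L}(x,\pi(x))\le(e_i,e_d)$ for every $x\in Z$. Since we are in the regime $\frac{K}{2}\le\lfloor\tau K\rfloor<K$, the hypothesis of Lemma \ref{20230701lemmathree} is met, so the lemma yields $B^K_{(\tau,e_i,e_d)}(Z)\cap B^K_{(\tau,e_i,e_d)}(Z')\neq\emptyset$. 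This contradicts the fact that $\mathcal{C}$ is a $(\tau,e_i,e_d)_K$-DNA-correcting code, which by definition requires $B^K_{(\tau,e_i,e_d)}(Z)\cap B^K_{(\tau,e_i,e_d)}(Z')=\emptyset$ for all distinct $Z,Z'\in\mathcal{C}$.

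Hence for every pair of distinct codewords $Z,Z'\in\mathcal{C}$ and every $\pi\in{\rm{BI}}(Z,Z')$ there must be some $x\in Z$ for which $d_{H,L}(x,\pi(x))\le(e_i,e_d)$ fails. Unwinding the definition of the partial order on $\mathbb{R}^2$, the negation of $d_{H,L}(x,\pi(x))\le(e_i,e_d)$ is precisely $d_{H,l}(x,\pi(x))>e_i$ or $d_{H,L-l}(x,\pi(x))>e_d$, which is exactly the asserted conclusion.

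There is essentially no obstacle here: all the combinatorial content sits in Lemma \ref{20230701lemmathree}, via the explicit common output multiset $A$ in which each $s_j$ appears $\lfloor\tau K\rfloor$ times and each $s_j'$ appears $K-\lfloor\tau K\rfloor$ times — legitimate precisely because $\lfloor\tau K\rfloor\ge K/2$ forces $K-\lfloor\tau K\rfloor\le\lfloor\tau K\rfloor$, so neither $Z$ nor $Z'$ has too many ``incorrect'' copies. The only points to verify are that the standing assumption $\frac{K}{2}\le\lfloor\tau K\rfloor<K$ matches the hypothesis of Lemma \ref{20230701lemmathree} verbatim, and that (unlike Theorem \ref{20230701theoremone}) this statement is only a necessary condition, since the converse of Lemma \ref{20230701lemmathree} is not claimed in this regime.
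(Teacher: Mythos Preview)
Your proof is correct and matches the paper's own treatment: the paper simply states that the theorem follows immediately from Lemma \ref{20230701lemmathree}, and your contrapositive argument is exactly the unpacking of that one-line deduction.
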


\indent The following result can now be easily derived from Theorem \ref{20230701theoremtwo} and Lemma \ref{20230701lemmatwo}.

\begin{corollary}[\cite{boruchovsky2023dna}, Lemma 2]
  Assume that $\frac{K}{2}\le\lfloor\tau K\rfloor<K$. If $\mathcal{C}\subset\mathcal{X}_{M,L,l}$ is a $(\tau,e_i,0)_K$-DNA-correcting code, then $\mathcal{D}(\mathcal{C})>e_i$.
\end{corollary}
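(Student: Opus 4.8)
The plan is a short proof by contradiction that simply glues together the two results quoted immediately above. Suppose that $\mathcal{C}$ is a $(\tau,e_i,0)_K$-DNA-correcting code but, contrary to the claim, $\mathcal{D}(\mathcal{C})\le e_i$. First I would apply Lemma~\ref{20230701lemmatwo} with $r=e_i$: it produces two distinct codewords $Z,Z'\in\mathcal{C}$ together with a bijection $\pi\in{\rm{BI}}(Z,Z')$ such that $d_{H,L}(x,\pi(x))\le(e_i,0)$ for every $x\in Z$.

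Next I would invoke Theorem~\ref{20230701theoremtwo} in the case $e_d=0$. Since $\mathcal{C}$ is a $(\tau,e_i,0)_K$-DNA-correcting code and $\frac{K}{2}\le\lfloor\tau K\rfloor<K$, that theorem guarantees that for the pair $Z\ne Z'$ and for the very bijection $\pi$ just produced there must exist some $x\in Z$ with $d_{H,L}(x,\pi(x))\not\le(e_i,0)$. This directly contradicts the conclusion of the previous step, so the assumption $\mathcal{D}(\mathcal{C})\le e_i$ is untenable and $\mathcal{D}(\mathcal{C})>e_i$, as desired.

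There is essentially no obstacle here; the only point worth spelling out is the bookkeeping that the threshold vector $(e_i,0)$ appearing in Lemma~\ref{20230701lemmatwo} coincides with the vector $(e_i,e_d)$ of Theorem~\ref{20230701theoremtwo} once we set $e_d=0$, so that the two statements genuinely contradict one another on the same pair $(Z,Z')$ and the same $\pi$. Everything else—the hypothesis $\frac{K}{2}\le\lfloor\tau K\rfloor<K$, the distinctness $Z\ne Z'$, and the universal quantifier over $\pi$—is inherited verbatim from Theorem~\ref{20230701theoremtwo}, so the argument is a one-line composition of the two preceding statements.
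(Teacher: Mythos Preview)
Your proposal is correct and follows exactly the approach the paper indicates: the corollary is stated there as an immediate consequence of Theorem~\ref{20230701theoremtwo} and Lemma~\ref{20230701lemmatwo}, and your contradiction argument is precisely the one-line composition of these two results.
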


\indent For any $r_1$, $r_2\in\mathbb{R}_{\ge 0}$, put
\begin{align*}
  \overline{\mathcal{X}}_{M,L,l}^{(r_1,r_2)}=\{  Z\in & \mathcal{X}_{M,L,l}:\ \forall\ x\ne y\in Z, \\
                                                      & d_{H,L}(x,y)\not\le(r_1,r_2)\}.
\end{align*}

Restricting ourselves to $\overline{\mathcal{X}}_{M,L,l}^{(2e_i,2e_d)}$, the converse of Lemma \ref{20230701lemmathree} is also true.

\begin{lemma}\label{20230702lemmaone}
  Assume that $\frac{K}{2}\le\lfloor\tau K\rfloor<K$ and let $Z$, $Z'\in\overline{\mathcal{X}}_{M,L,l}^{(2e_i,2e_d)}$. Then
  $$B^K_{(\tau,e_i,e_d)}(Z)\cap B^K_{(\tau,e_i,e_d)}(Z')\neq\emptyset$$
  if and only if there exists $\pi\in{\rm{BI}}(Z,Z')$ such that for any $x\in Z$, $d_{H,L}(x,\pi(x))\le(e_i,e_d)$.
\end{lemma}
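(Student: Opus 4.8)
The plan is to prove the two directions separately, reusing the machinery already developed. The ``if'' direction is exactly Lemma \ref{20230701lemmathree}, so nothing new is needed there; I would simply invoke it. The real content is the ``only if'' direction, and the hypothesis $Z,Z'\in\overline{\mathcal{X}}_{M,L,l}^{(2e_i,2e_d)}$ is what makes it go through, since without it the reconstruction step can create ambiguities that destroy the matching.

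First I would set up notation: write $Z=\{s_1,\dots,s_M\}$, $Z'=\{s_1',\dots,s_M'\}$, and suppose $A=\{\{t_1,\dots,t_{MK}\}\}\in B^K_{(\tau,e_i,e_d)}(Z)\cap B^K_{(\tau,e_i,e_d)}(Z')$. By definition of the storage channel applied to $Z$, after relabeling we may partition $A$ into $M$ blocks $A_1,\dots,A_M$ of size $K$, where block $A_j$ consists of noisy copies of $s_j$ in the sense that at least $K-\lfloor\tau K\rfloor$ of its members equal $s_j$ exactly and every member $t$ of $A_j$ satisfies $d_{H,L}(t,s_j)\le(e_i,e_d)$ — wait, actually the channel only guarantees the Hamming bound on all copies and exact equality on the ``good'' copies; re-reading the model, condition (1) says at most $\lfloor\tau K\rfloor$ are \emph{distinct} from the input strand, so at least $K-\lfloor\tau K\rfloor\ge 1$ copies in each block are literally equal to $s_j$, and condition (2) gives $d_{H,L}(t,s_j)\le(e_i,e_d)$ for \emph{all} $t$ in the block. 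Since $\lfloor\tau K\rfloor<K$, each block $A_j$ contains the strand $s_j$ itself (with multiplicity $\ge 1$). Doing the same for $Z'$ gives a second partition into blocks $A_1',\dots,A_M'$ with $s_j'\in A_j'$.

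Now the key combinatorial step: I want to match up the blocks. Fix $j$ and pick any $t\in A_j$ that also lies in some block $A_k'$; since $\lfloor\tau K\rfloor\ge K/2$, a counting argument shows the exact-copy portions are large enough that such overlaps are forced — more precisely, $A_j$ contains at least $K-\lfloor\tau K\rfloor$ copies of $s_j$ and $A_k'$ contains at least $K-\lfloor\tau K\rfloor$ copies of $s_k'$, and since $2(K-\lfloor\tau K\rfloor)$ could be small this needs care; the cleaner route is: the element $s_j\in A_j\subset A$ must lie in \emph{some} primed block $A_{\sigma(j)}'$, and then $d_{H,L}(s_j,s_{\sigma(j)}')\le(e_i,e_d)$. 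If I can show the map $j\mapsto\sigma(j)$ can be chosen to be a bijection, I am done by setting $\pi(s_j)=s_{\sigma(j)}'$. This is where $\overline{\mathcal{X}}_{M,L,l}^{(2e_i,2e_d)}$ enters: if $s_j$ and $s_{j'}$ with $j\ne j'$ both landed in the same primed block $A_k'$, then $d_{H,L}(s_j,s_k')\le(e_i,e_d)$ and $d_{H,L}(s_{j'},s_k')\le(e_i,e_d)$, so by the triangle inequality $d_{H,L}(s_j,s_{j'})\le(2e_i,2e_d)$, contradicting $Z\in\overline{\mathcal{X}}_{M,L,l}^{(2e_i,2e_d)}$. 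Hence the assignment is injective, and being a map between two $M$-element sets it is a bijection.

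The main obstacle I anticipate is making the block-partition argument airtight when $\lfloor\tau K\rfloor$ is strictly between $K/2$ and $K$: I must be sure that each block genuinely contains an exact copy of its generating strand (this uses only $\lfloor\tau K\rfloor<K$, so it is fine), and that when I say ``$s_j$ lies in some primed block'' I am using the fact that $A$ is literally the \emph{disjoint} union of the primed blocks $A_1',\dots,A_M'$ as multisets, so every occurrence of every element of $A$ — in particular the occurrence of $s_j$ coming from $A_j$ — is accounted for by exactly one primed block. Once that bookkeeping is stated carefully, the triangle-inequality injectivity argument finishes it. I would also remark that the role of $\frac{K}{2}\le\lfloor\tau K\rfloor$ is only needed for the ``if'' direction (via Lemma \ref{20230701lemmathree}); the ``only if'' direction uses just $\lfloor\tau K\rfloor<K$ together with the restriction to $\overline{\mathcal{X}}_{M,L,l}^{(2e_i,2e_d)}$.
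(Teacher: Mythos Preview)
Your proposal is correct and follows essentially the same line as the paper: use $\lfloor\tau K\rfloor<K$ to ensure each $s_j\in Z$ literally appears in $A$, so it must sit within $(e_i,e_d)$ of some $s_{\sigma(j)}'\in Z'$, and then apply the triangle inequality together with the $(2e_i,2e_d)$-separation hypothesis to force $\sigma$ to be a bijection. The only cosmetic difference is that the paper first uses $Z'\in\overline{\mathcal{X}}_{M,L,l}^{(2e_i,2e_d)}$ to make the assignment $x\mapsto x'$ uniquely defined and then $Z\in\overline{\mathcal{X}}_{M,L,l}^{(2e_i,2e_d)}$ for injectivity, whereas you simply pick an arbitrary primed block for each $s_j$ and use only the $Z$-hypothesis for injectivity; both are the same argument.
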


\begin{proof}
  By Lemma \ref{20230701lemmathree}, we only need to prove the necessity. Take $A\in B^K_{(\tau,e_i,e_d)}(Z)\cap B^K_{(\tau,e_i,e_d)}(Z')$. Since $\tau<1$, for any $x\in Z$, $x$ lies in $A$, which implies that there exists $x'\in Z'$ such that $d_{H,L}(x,x')\le(e_i,e_d)$. Assume that there exists another element $x''\in Z'$ with $x''\ne x'$ such that $d_{H,L}(x,x'')\le (e_i,e_d)$ for some $x'\in Z'$. Then
  $$d_{H,L}(x',x'')\le d_{H,L}(x,x')+d_{H,L}(x,x'')\le (2e_i,2e_d),$$
  which contradicts the hypothesis that $Z'\in\overline{\mathcal{X}}_{M,L,l}^{(2e_i,2e_d)}$. Therefore, we have a map $\pi:Z\rightarrow Z'$ given by $x\mapsto x'$, which satisfies that $d_{H,L}(x,\pi(x))\le (e_i,e_d)$ for any $x\in Z$. Using the hypothesis that $Z\in\overline{\mathcal{X}}_{M,L,l}^{(2e_i,2e_d)}$, we can show that $\pi$ is a bijection as above.
\end{proof}

\indent If $\frac{K}{2}\le\lfloor\tau K\rfloor<\frac{MK}{2M-1}$, then restricting ourselves to $\overline{\mathcal{X}}_{M,L,l}^{(e_i,e_d)}$ is enough.

\begin{lemma}\label{20230702lemmatwo}
  Assume that $\frac{K}{2}\le\lfloor\tau K\rfloor<\frac{MK}{2M-1}$ and let $Z$, $Z'\in\overline{\mathcal{X}}_{M,L,l}^{(e_i,e_d)}$. Then
  $$B^K_{(\tau,e_i,e_d)}(Z)\cap B^K_{(\tau,e_i,e_d)}(Z')\neq\emptyset$$
  if and only if there exists $\pi\in{\rm{BI}}(Z,Z')$ such that for any $x\in Z$, $d_{H,L}(x,\pi(x))\le(e_i,e_d)$.
\end{lemma}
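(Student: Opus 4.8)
The sufficiency follows at once from Lemma~\ref{20230701lemmathree}: since $M\ge1$ we have $\frac{MK}{2M-1}\le K$, so the hypothesis places $\lfloor\tau K\rfloor$ in the range $[\frac K2,K)$ required there. Thus the work is entirely in the necessity, and I would reduce it to one combinatorial statement: if $A\in B^K_{(\tau,e_i,e_d)}(Z)\cap B^K_{(\tau,e_i,e_d)}(Z')$, then the bipartite graph $G=(Z\sqcup Z',E)$ with $z\sim z'\iff d_{H,L}(z,z')\le(e_i,e_d)$ has a perfect matching, such a matching being exactly a $\pi\in{\rm{BI}}(Z,Z')$ with $d_{H,L}(x,\pi(x))\le(e_i,e_d)$ for all $x$. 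The plan is to verify Hall's condition for $G$.

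Write $Z=\{z_1,\dots,z_M\}$, $Z'=\{z_1',\dots,z_M'\}$, $\delta:=K-\lfloor\tau K\rfloor$, and let $w:=|Z\cap Z'|$ be the number of strands common to the two messages; if $w=M$ then $Z=Z'$ and the identity works, so assume $w\le M-1$. Fix the two block decompositions of $A$: a $Z$-decomposition $A_1\sqcup\dots\sqcup A_M$ and a $Z'$-decomposition $A_1'\sqcup\dots\sqcup A_M'$, each block of size $K$, so that every read of $A_j$ (resp.\ $A_k'$) is within $(e_i,e_d)$ of $z_j$ (resp.\ $z_k'$) and at least $\delta$ reads of $A_j$ (resp.\ $A_k'$) equal $z_j$ (resp.\ $z_k'$); in particular each $z_j$ and each $z_k'$ occurs at least $\delta$ times in $A$.

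Now suppose Hall's condition fails for $G$, i.e.\ there is $Y\subseteq Z$ with $n:=|N_G(Y)|<r:=|Y|$. Let $w_Y$ be the number of common strands lying in $Y$; via the zero‑distance edges $(v,v)$ each such $v$ forces its $Z'$-copy into $N_G(Y)$, so $N_G(Y)$ contains at least $w_Y$ common strands — write $w_N\ge w_Y$ for how many it contains. The crucial observation is that a read of $A$ equal to a \emph{non-common} strand $z_j\in Y$ cannot lie in the $Z'$-block $A_k'$ of a common strand: that would give $d_{H,L}(z_j,z_k')\le(e_i,e_d)$ with $z_j,z_k'\in Z$ distinct, contradicting $Z\in\overline{\mathcal{X}}_{M,L,l}^{(e_i,e_d)}$; and such a read must lie in some $A_k'$ with $z_k'\in N_G(Y)$. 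Hence all such reads — of which there are at least $(r-w_Y)\delta$ — are distributed among the $n-w_N$ blocks $A_k'$ indexed by non-common strands of $N_G(Y)$, where each of them is an \emph{incorrect} read, so each such block contains at most $\lfloor\tau K\rfloor$ of them. This yields $(r-w_Y)\delta\le(n-w_N)\lfloor\tau K\rfloor$; the mirror argument, applied to reads equal to non-common strands of $Z'\setminus N_G(Y)$ and to the $Z$-decomposition, yields $(M-n-w+w_N)\delta\le(M-r-w+w_Y)\lfloor\tau K\rfloor$.

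Adding the two inequalities, the coefficient of $\delta$ is $\alpha:=(M-w)+(r-n)+(w_N-w_Y)$ and the coefficient of $\lfloor\tau K\rfloor$ is some $\beta$ with $\alpha+\beta=2(M-w)$ and $\alpha-\beta=2\bigl((r-n)+(w_N-w_Y)\bigr)\ge2$; a routine check gives $\beta\ge0$, hence $\beta\ge1$ since $\alpha\delta>0$. Substituting $\delta=K-\lfloor\tau K\rfloor$ into $\alpha\delta\le\beta\lfloor\tau K\rfloor$ then forces
\[\lfloor\tau K\rfloor\ge\frac{\alpha K}{\alpha+\beta}=\frac{\alpha K}{2(M-w)}\ge\frac K2+\frac{K}{2(M-w)}\ge\frac{(M+1)K}{2M}\ge\frac{MK}{2M-1},\]
contradicting $\lfloor\tau K\rfloor<\frac{MK}{2M-1}$ (when $M=1$ the last inequality is an equality, which already contradicts the strict bound). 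So Hall's condition holds and $G$ has the desired matching. The step I expect to be the real obstacle is making those two counting inequalities tight: it demands careful bookkeeping of the strands common to $Z$ and $Z'$ together with the observation that a read equal to a non-common strand is ``correct'' for at most one of the two relevant blocks, which is what upgrades the trivial ``$\le K$ reads per block'' to ``$\le\lfloor\tau K\rfloor$ incorrect reads per block'' — and this refinement is precisely what must be paid for by the hypothesis $\lfloor\tau K\rfloor<\frac{MK}{2M-1}$ rather than the weaker $\lfloor\tau K\rfloor<K$ that sufficed for $\overline{\mathcal{X}}_{M,L,l}^{(2e_i,2e_d)}$ in Lemma~\ref{20230702lemmaone}.
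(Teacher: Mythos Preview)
Your argument is correct, but the route differs from the paper's in a substantive way. Both proofs reduce necessity to showing that the bipartite graph $G=(Z\sqcup Z',E)$ with edges $d_{H,L}\le(e_i,e_d)$ has a perfect matching, and both derive a contradiction from a Hall-violating set $Y$ by counting how many ``correct'' reads of one side must land as ``incorrect'' reads on the other. The difference is in how the key obstacle---that a read equal to some $y\in Y$ might coincide with the strand $z_k'$ of the $Z'$-block receiving it, and hence not count as incorrect---is handled. The paper takes $Y$ of \emph{minimal} cardinality and uses minimality together with $Z\in\overline{\mathcal{X}}_{M,L,l}^{(e_i,e_d)}$ to show that every $y'\in N_G(Y)$ is adjacent to at least two vertices of $Y$, whence $y\neq y'$ for every adjacent pair; this gives the single inequality $r(K-\lfloor\tau K\rfloor)\le s\lfloor\tau K\rfloor$ with $s\le r-1$, and then $\lfloor\tau K\rfloor\ge\frac{rK}{2r-1}\ge\frac{MK}{2M-1}$ immediately. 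You instead keep $Y$ arbitrary, explicitly separate the common strands $Z\cap Z'$ from the rest, restrict the count to non-common strands on each side, and pair the resulting inequality with its mirror for $Z'\setminus N_G(Y)$ before summing. Your approach is more elementary in that it avoids the minimality trick, but it requires the extra bookkeeping with $w,w_Y,w_N$ and the second inequality; the paper's proof is noticeably shorter. (Incidentally, your aside that $\beta\ge1$ is not actually used in the final chain of inequalities---all you need is $\alpha+\beta=2(M-w)>0$ to divide.)
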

\begin{proof}
  By Lemma \ref{20230701lemmathree}, we only need to prove the necessity. Assume that there does not exist any bijection $\pi\in{\rm{BI}}(Z,Z')$ satisfying the desired condition. Consider the bipartite graph $G=(Z\cup Z',E)$ where
  $$E=\{(x,x')\in Z\times Z':\ d_{H,L}(x,x')\le (e_i,e_d)\}.$$
  Then the assumption is equivalent to saying that there does not exist any $Z$-perfect matching. By Hall's marriage theorem, there exists $Y\subset Z$ such that $|Y|>|N_G(Y)|$. We may assume that $Y$ has the smallest cardinality among all subsets of $Z$ satisfying this condition. Assume that $|Y|=r$ and $|N_G(Y)|=s$. Since $\tau<1$, for any $x\in Z$, there exists $x'\in Z'$ such that $(x,x')\in E$, which implies that $s\ge 1$ and thus $r\ge 2$.\\
  \indent For any $y'\in N_G(Y)$, take any $y\in Y$ such that $(y,y')\in E$ and put $Y_0=Y\backslash\{y\}$. If $y'\not\in N_G(Y_0)$, then
  $$|N_G(Y_0)|\le |N_G(Y)|-1<|Y|-1=|Y_0|,$$
  which contradicts the minimality of $Y$. Hence there exists $y_0\in Y_0$ such that $(y_0,y')\in E$. Since $Z\in\overline{\mathcal{X}}_{M,L,l}^{(e_i,e_d)}$, we have $d_{H,L}(y,y_0)\not\le(e_i,e_d)$. Since $(y_0,y')\in E$, i.e., $d_{H,L}(y_0,y')\le (e_i,e_d)$, we must have $y\ne y'$.\\
  \indent Take
  $$A\in B^K_{(\tau,e_i,e_d)}(Z)\cap B^K_{(\tau,e_i,e_d)}(Z').$$
  Since $\tau<1$, each strand in $Y$ occurs at least $K-\lfloor\tau K\rfloor$ times in $A$. Such strands can occur only as noisy copies of strands in $N_G(Y)$. Furthermore, by the discussions in the previous paragraph, they can occur only as error copies of strands in $N_G(Y)$. It follows that
  $$r(K-\lfloor\tau K\rfloor)\le s\lfloor\tau K\rfloor ,$$
  which implies that
  $$rK\le (s+r)\lfloor\tau K\rfloor\le (2r-1)\lfloor\tau K\rfloor$$
  and thus
  $$\lfloor\tau K\rfloor\ge\frac{rK}{2r-1}\ge\frac{MK}{2M-1}.$$
  This contradicts our hypothesis and thus
  $$B^K_{(\tau,e_i,e_d)}(Z)\cap B^K_{(\tau,e_i,e_d)}(Z')=\emptyset.$$
\end{proof}

\begin{remark}
  Distinct from the other results in this section, this lemma is not a formal generalization of any result in \cite{boruchovsky2023dna}.
\end{remark}

\indent The following theorem follows immediately from Lemma \ref{20230702lemmaone} and Lemma \ref{20230702lemmatwo}.

\begin{theorem}\label{20230701themfour}
  Assume that $\frac{K}{2}\le\lfloor\tau K\rfloor<K$.
  \begin{enumerate}
    \item For any code $\mathcal{C}\subset\overline{\mathcal{X}}_{M,L,l}^{(2e_i,2e_d)}$, $\mathcal{C}$ is a $(\tau,e_i,e_d)_K$-DNA-correcting code if and only if for any $Z$, $Z'\in\mathcal{C}$ with $Z\neq Z'$, it holds that for any $\pi\in{\rm{BI}}(Z,Z')$, there exists $x\in Z$ such that $d_{H,L}(x,\pi(x))\not\le(e_i,e_d)$.
    \item If moreover $\lfloor\tau M\rfloor<\frac{MK}{2M-1}$, then for any code $\mathcal{C}\subset\overline{\mathcal{X}}_{M,L,l}^{(e_i,e_d)}$, $\mathcal{C}$ is a $(\tau,e_i,e_d)_K$-DNA-correcting code if and only if for any $Z$, $Z'\in\mathcal{C}$ with $Z\neq Z'$, it holds that for any $\pi\in{\rm{BI}}(Z,Z')$, there exists $x\in Z$ such that $d_{H,L}(x,\pi(x))\not\le(e_i,e_d)$.
  \end{enumerate}
\end{theorem}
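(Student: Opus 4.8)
The plan is to reduce the statement directly to Lemmas~\ref{20230702lemmaone} and~\ref{20230702lemmatwo} by unwinding the definition of a DNA-correcting code and taking contrapositives; no new combinatorial input is needed. Recall that, by definition, a code $\mathcal{C}\subset\mathcal{X}_{M,L,l}$ is a $(\tau,e_i,e_d)_K$-DNA-correcting code if and only if $B^K_{(\tau,e_i,e_d)}(Z)\cap B^K_{(\tau,e_i,e_d)}(Z')=\emptyset$ for every pair of distinct codewords $Z\ne Z'\in\mathcal{C}$. So for each part it suffices to characterize, for one fixed such pair, when this intersection is empty, and then to quantify over all pairs.

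For part (1): since $\mathcal{C}\subset\overline{\mathcal{X}}_{M,L,l}^{(2e_i,2e_d)}$, both $Z$ and $Z'$ lie in $\overline{\mathcal{X}}_{M,L,l}^{(2e_i,2e_d)}$, so the hypotheses of Lemma~\ref{20230702lemmaone} are met for this pair. That lemma gives that $B^K_{(\tau,e_i,e_d)}(Z)\cap B^K_{(\tau,e_i,e_d)}(Z')\ne\emptyset$ precisely when some $\pi\in{\rm{BI}}(Z,Z')$ satisfies $d_{H,L}(x,\pi(x))\le(e_i,e_d)$ for all $x\in Z$. Negating both sides of this equivalence, the intersection is empty precisely when, for every $\pi\in{\rm{BI}}(Z,Z')$, there is some $x\in Z$ with $d_{H,L}(x,\pi(x))\not\le(e_i,e_d)$. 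Quantifying over all pairs $Z\ne Z'\in\mathcal{C}$ yields the asserted equivalence.

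For part (2): the argument is word-for-word the same, with $\overline{\mathcal{X}}_{M,L,l}^{(e_i,e_d)}$ in place of $\overline{\mathcal{X}}_{M,L,l}^{(2e_i,2e_d)}$ and Lemma~\ref{20230702lemmatwo} in place of Lemma~\ref{20230702lemmaone}; here one additionally invokes the hypothesis $\lfloor\tau K\rfloor<\frac{MK}{2M-1}$ needed for that lemma to apply.

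I expect essentially no obstacle beyond bookkeeping, since the mathematical content resides entirely in the two lemmas. The only points I would spell out carefully are: (i) that membership of $\mathcal{C}$ in the relevant restricted set places \emph{both} codewords of each pair in the domain where the corresponding lemma applies; and (ii) the correct negation of the "there exists a bijection such that for all $x$ \dots" statement, where one uses implicitly that ${\rm{BI}}(Z,Z')$ and ${\rm{BI}}(Z',Z)$ correspond under inversion and that $d_{H,L}$ is symmetric, so that the condition in the conclusion is insensitive to the order of $Z$ and $Z'$.
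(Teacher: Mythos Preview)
Your proposal is correct and matches the paper's approach exactly: the paper states that the theorem ``follows immediately from Lemma~\ref{20230702lemmaone} and Lemma~\ref{20230702lemmatwo},'' which is precisely the reduction-by-contrapositive you outline. (Your silent correction of the apparent typo $\lfloor\tau M\rfloor$ to $\lfloor\tau K\rfloor$ in part~(2) is also right, since that is the hypothesis of Lemma~\ref{20230702lemmatwo}.)
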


\indent In \cite{boruchovsky2023dna}, the authors considered the following set
$$\overline{\mathcal{X}}_{M,L,l}=\{Z\in\mathcal{X}_{M,L,l}:\ |S(Z)|=M\}.$$
It is clear that $\overline{\mathcal{X}}_{M,L,l}\subset\overline{\mathcal{X}}_{M,L,l}^{(2e_i,0)}$ and thus the following result follows immediately from Theorem \ref{20230701themfour} and Lemma \ref{20230701lemmatwo}.

\begin{corollary}[\cite{boruchovsky2023dna}, Corollary 2]
  Assume that $\frac{K}{2}\le\lfloor\tau K\rfloor<K$. For any code $\mathcal{C}\subset\overline{\mathcal{X}}_{M,L,l}$, $\mathcal{C}$ is a $(\tau,e_i,0)_K$-DNA-correcting code if and only if $\mathcal{D}(\mathcal{C})>e_i$.
\end{corollary}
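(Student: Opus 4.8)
The plan is to deduce this corollary from two results already in hand: part~(1) of Theorem~\ref{20230701themfour}, specialized to $e_d=0$, and Lemma~\ref{20230701lemmatwo}, specialized to $r=e_i$. The only genuinely new observation needed is the set inclusion $\overline{\mathcal{X}}_{M,L,l}\subset\overline{\mathcal{X}}_{M,L,l}^{(2e_i,0)}$, which is what licenses the use of Theorem~\ref{20230701themfour}(1) on a code $\mathcal{C}\subset\overline{\mathcal{X}}_{M,L,l}$.

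First I would check that inclusion. If $Z\in\overline{\mathcal{X}}_{M,L,l}$, then $|S(Z)|=M$, so the $M$ data-fields appearing in $Z$ are pairwise distinct; hence for any two distinct $x=(ind,u)$ and $y=(ind',u')$ in $Z$ we have $u\ne u'$, so $d_{H,L-l}(x,y)=w_H(u-u')\ge 1$, and therefore $d_{H,L}(x,y)\not\le(2e_i,0)$ regardless of $d_{H,l}(x,y)$. Thus $Z\in\overline{\mathcal{X}}_{M,L,l}^{(2e_i,0)}$, which is precisely $\overline{\mathcal{X}}_{M,L,l}^{(2e_i,2e_d)}$ when $e_d=0$.

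Since $\mathcal{C}\subset\overline{\mathcal{X}}_{M,L,l}\subset\overline{\mathcal{X}}_{M,L,l}^{(2e_i,0)}$, Theorem~\ref{20230701themfour}(1) with $e_d=0$ gives: $\mathcal{C}$ is a $(\tau,e_i,0)_K$-DNA-correcting code if and only if for every pair of distinct $Z,Z'\in\mathcal{C}$ and every $\pi\in{\rm{BI}}(Z,Z')$ there is some $x\in Z$ with $d_{H,L}(x,\pi(x))\not\le(e_i,0)$; call this condition ($\star$). On the other hand, Lemma~\ref{20230701lemmatwo} with $r=e_i$ says that $\mathcal{D}(\mathcal{C})\le e_i$ if and only if there exist distinct $Z,Z'\in\mathcal{C}$ and $\pi\in{\rm{BI}}(Z,Z')$ with $d_{H,L}(x,\pi(x))\le(e_i,0)$ for all $x\in Z$; taking contrapositives, $\mathcal{D}(\mathcal{C})>e_i$ if and only if ($\star$) holds. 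Chaining the two equivalences yields that $\mathcal{C}$ is a $(\tau,e_i,0)_K$-DNA-correcting code $\iff\mathcal{D}(\mathcal{C})>e_i$, as required.

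I do not expect a serious obstacle: the argument is a matter of bookkeeping. The two places that need care are (i) verifying that the hypothesis $\mathcal{C}\subset\overline{\mathcal{X}}_{M,L,l}$ really does put $\mathcal{C}$ inside the domain $\overline{\mathcal{X}}_{M,L,l}^{(2e_i,2e_d)}$ required by Theorem~\ref{20230701themfour}(1) --- this works only because $e_d=0$ forces the second coordinate of $d_{H,L}$ to separate distinct strands --- and (ii) keeping the quantifier alternation straight when negating Lemma~\ref{20230701lemmatwo}, so that the existential characterization of $\mathcal{D}(\mathcal{C})\le e_i$ flips cleanly into the universally quantified condition ($\star$).
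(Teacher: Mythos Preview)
Your proposal is correct and follows exactly the route the paper takes: the paper states the inclusion $\overline{\mathcal{X}}_{M,L,l}\subset\overline{\mathcal{X}}_{M,L,l}^{(2e_i,0)}$ and then declares the corollary immediate from Theorem~\ref{20230701themfour} and Lemma~\ref{20230701lemmatwo}. Your write-up simply unpacks those two invocations (specializing to $e_d=0$ and $r=e_i$ and taking the contrapositive of Lemma~\ref{20230701lemmatwo}), which is precisely what the paper intends.
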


\bibliographystyle{IEEEtran}
\bibliography{references}{}

\end{document}